\documentclass[conference,letterpaper]{ieeeconf}

\usepackage{amsmath,amssymb,amsfonts,amsbsy,mathrsfs}
\usepackage{blindtext}
\usepackage{graphicx}
\usepackage{color}
\usepackage{float}
\usepackage{multirow}
\usepackage{cite}
\usepackage{setspace}

%

\newcommand{\R}{\mathbb R}
\newcommand{\N}{\mathbb N}


\newtheorem{assumption}{Assumption}
\newtheorem{remark}{Remark}
\newtheorem{theorem}{Theorem}


\title{A Sliding Mode Observer Approach for Attack Detection and Estimation in Autonomous Vehicle Platoons using Event Triggered Communication}
\author{%
Twan Keijzer \\
\textit{Delft Centre for Systems and Control}\\
\textit{Delft University of Technology}\\
{\tt\small t.keijzer@tudelft.nl}%
\and
Riccardo M.G. Ferrari \\
\textit{Delft Centre for Systems and Control}\\
\textit{Delft University of Technology}\\
{\tt\small r.ferrari@tudelft.nl}%
}
\begin{document}
\maketitle
\begin{abstract}
Platoons of autonomous vehicles are being investigated as a way to increase road capacity and fuel efficiency. Cooperative Adaptive Cruise Control (CACC) is one approach to controlling platoons longitudinal dynamics, which requires wireless communication between vehicles. In the present paper we use a sliding mode observer to detect and estimate cyber-attacks threatening such wireless communication. In particular we prove stability of the observer and robustness of the detection threshold in the case of event-triggered communication, following a realistic Vehicle-to-Vehicle network protocol.
\end{abstract}
\section{Introduction}\label{sec:introduction}
\noindent Autonomous vehicle {\em platoons} and {\em Cooperative Adaptive Cruise Control} (CACC) are topics that received significant attention by researchers in recent years \cite{Wu2016-lz, Hu2018-ua, Chen2015-mo, Somarakis2018-eg,naus2010cooperative,Ploeg2011-nc}. CACC is a longitudinal cooperative control technique that allows platoons, or strings, of autonomous vehicles to coordinate themselves. The goal is to have vehicles in the platoon travelling closer together than human drivers, or non-cooperative control approaches like {\em Adaptive Cruise Control}, can. Benefits of this lower inter-vehicle spacing include better fuel efficiency and road utilization.
Vehicles in a CACC platoon measure relative position and velocity of the preceding vehicle, and also communicate (see figure~\ref{cacc}) in order to attain {\em string stability}, which is an important property resulting in dampening of velocity changes down the platoon \cite{Ploeg2011-nc}.
\vspace{-0.1cm}
\begin{figure}[H]
	\centering
	\includegraphics[width=\columnwidth,trim={0cm, 1.1cm, 0cm, 1.1cm},clip]{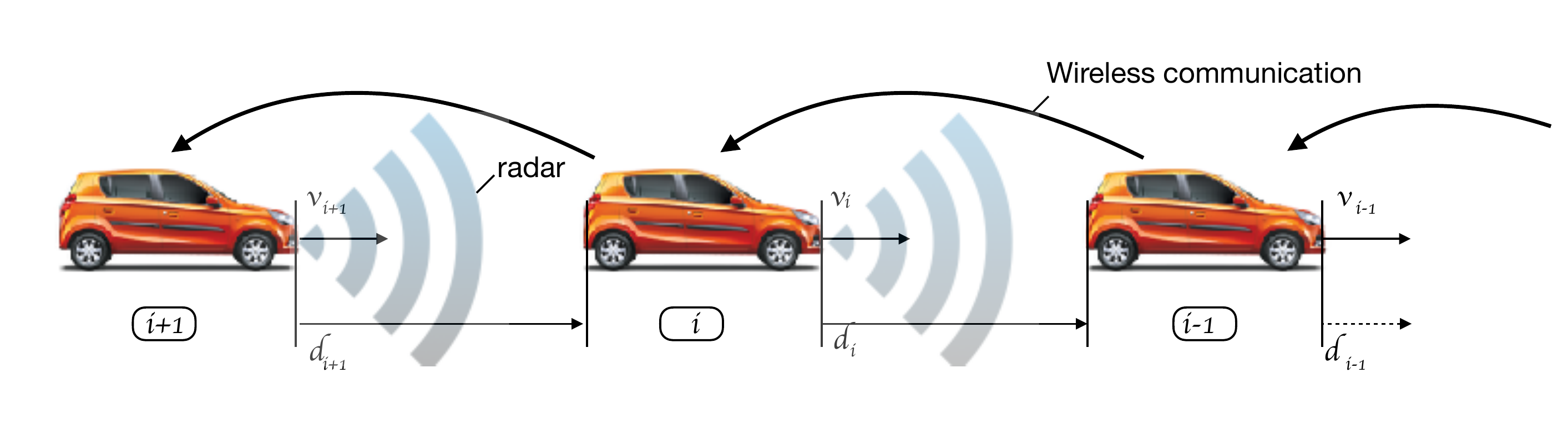}
	\caption{CACC equipped string of vehicles. The V2V communication is implemented wirelessly, and is subjected to a class of cyber attacks.}
	\label{cacc}
\end{figure}%
\vspace{-0.2cm}
The reliance of CACC platoons on inter-vehicle wireless communications, be it {\em periodic} or {\em event-triggered } \cite{Linsenmayer2015-ba,Dolk2017-wa,Proskurnikov2018-at}, may expose them to the same kind of threats as other networked control systems or {\em Cyber-Physical Systems} (CPS), such as {\em Denial of Service} (DoS), routing, replay and stealthy data injection attacks (see \cite{Cardenas2008-pp,Teixeira2015-nw}). Indeed, vulnerabilities of {\em Vehicle-to-Vehicle} (V2V) networks to cyber attacks have been investigated in \cite{ Studnia2013-dk,Miller2014-op,Amoozadeh2015-qp,Ploeg2017-zj}. While CACC can provide limited robustness to network induced effects such as random packet losses (see \cite{Lei2011-qq, Ploeg2013-iw}), the case of a malicious attacker targeting the (V2V) network  should be addressed by dedicated detection and fault-tolerant control methods.

While the case of faults in autonomous vehicles formations was addressed in \cite{Meskin2009-wu} and \cite{Quan2018-wd} with an observer-based approach, few works dealt with cyber-attacks. \cite{ Abdollahi_Biron2018-vu} considered the problem of designing a model based observer for detecting DoS attacks, which were characterised as an equivalent time delay in the communication network.\\
\indent In this paper we are going to extend some preliminary results presented by the authors in \cite{JAHANSHAHI2018212}, where a {\em Sliding-Mode Observer} (SMO) was introduced for estimating false data injection attacks. The contribution of the paper is twofold: we prove the stability of the SMO under event-triggered communication and less restrictive assumptions on measurement uncertainties, and we introduce robust adaptive attack detection thresholds for such a scenario. In particular, we will assume the vehicle platoon is using a realistic event-triggered communication protocol based on the current ETSI-ITS G5 V2V communication standard \cite{Lyamin2018-ht,European_Telecommunications_Standards_Institute2014-iz}.\\
\indent The use of sliding mode observers for fault detection was pioneered by \cite{Edwards2000-vs} and developed further by \cite{Floquet2004-vg,Yan2007-gl}, amongst others. By monitoring the so-called {\em equivalent output injection} (EOI), this method allows to estimate actuator and sensor faults or, as in \cite{JAHANSHAHI2018212} and the present case, a false data injection attack. Previous results considered continuous communication, and did not derive an adaptive detection threshold guaranteed to be robust against uncertainties or communication-induced effects. The literature on fault detection for event-triggered systems, instead, includes works such as \cite{Liu2013-ax,Sid2014-ko,Davoodi2017-ru}, which are concerned with the simultaneous design of the triggering condition and the fault detector, 
while \cite{Boem2017-yr} addressed the case of asynchronous communication and packet loss for fault detection of networked control systems.\\
\indent While several works considered the case of event-triggered sliding mode control, such as \cite{Cucuzzella2016-zx,Incremona2016-ci,Wu2017-gr,Behera2017-bp}, the present approach would be, to the best of the authors knowledge, the first contribution considering sliding mode observers for fault, or cyber-attack detection and estimation in systems where event--triggered communication is present.\\
\indent The remainder of the paper is organized as follows. Section~\ref{sec:problem_formulation} introduces event-triggered CACC for a vehicle platoon and describes the attack and its effect on the platoon. Section~\ref{sec:SMO} presents the sliding mode observer and characterizes its stability, and section~\ref{sec:thresholds} presents the attack detection threshold and provides theoretical results on its robustness. Section~\ref{sec:attack_est} provides preliminary results on attack estimation. In sections~\ref{sec:results} and \ref{sec:conclusion}, respectively, the simulation results, and conclusion and future work are presented.
\subsection{Notation}
\noindent Throughout the paper, a notation such as $x_i$ will denote a variable $x$ pertaining to the $i$--th vehicle, while $x_{i,(j)}$ will denote the $j$--th component of the vector $x_i$.
\section{Problem Formulation}\label{sec:problem_formulation}
\subsection{Error Dynamics of a Platoon using CACC}
\noindent In the present paper we will use the CACC formulation in \cite{Ploeg2011-nc} and its extension to event triggered communication introduced in \cite{Dolk2017-wa}, while the event-triggering condition will follow \cite{Lyamin2018-ht,European_Telecommunications_Standards_Institute2014-iz}.
We will consider a string of $m\in \N$ homogeneous vehicles (see Figure~{\ref{cacc}}), each modeled as
\begin{equation} \label{eq:vehicle_dynamics}
	\begin{bmatrix} \dot{p}_i(t) \\
				\dot{v}_i(t) \\
				\dot{a}_i(t)
	\end{bmatrix} =
	\begin{bmatrix} v_i(t) \\
				a_i(t) \\
				\frac{1}{\tau}(u_i(t)-a_i(t))
	\end{bmatrix},
\end{equation}
where $p_i(t)$, $v_i(t)$, $a_i(t)$ and $u_i(t) \in \R$ are the position, velocity, acceleration and the input of the $i$-th vehicle, respectively; furthermore, $\tau$ represents the engine's dynamics. Each vehicle is assumed to measure its own local output $y_{i} \triangleq [p_i \; v_i \; a_i]^\top + \xi_i$ and, with its front radar, the relative output $y_{i,i-1} \triangleq [d_i \; \Delta v_i]^\top + \eta_i$,
where  $d_i(t) \triangleq \left( p_{i-1}(t) \!- \! p_i(t) \!- \! L  \right)$ is the inter-vehicle distance, $L$ is the length of each vehicle, $\Delta v_i \triangleq v_{i-1} - v_i$ is the relative velocity and $\xi_i$ and $\eta_i$ are the measurement uncertainties affecting the vehicle sensors.
\begin{assumption}\label{ass:measurement_uncertainties}
For each $i$--th vehicle, the measurement uncertainties $\xi_i$ and $\eta_i$ are unknown but they are upper bounded by known quantities $\bar \xi_{i}$ and $\bar \eta_{i}$, i.e. $\vert \xi_{i,(j)}(t) \vert \leq \bar \xi_{i,(j)}(t)$ and $\vert \eta_{i,(j)}(t) \vert \leq \bar \eta_{i,(j)}(t)$ for all $j$, and all $t$.
\end{assumption}

\begin{equation}\label{eq:spacing_policy}
d_{r,i}(t)=r_i+hv_i(t) \,,
\end{equation}
while making the relative velocity $\Delta v_{i}$ tend to zero in steady state.
in eq. \eqref{eq:spacing_policy} $r_i$ and $h$ are the desired distance at stand still, and the time headway between the vehicles respectively. \cite{Ploeg2011-nc}

Let us introduce the position error $e_i(t) \triangleq d _i(t)-d_{r,i}(t)$ and its time derivative $\dot{e}_i(t) = \Delta v_i - h a_i(t)$. 
In \cite{Ploeg2011-nc}, a CACC control law is initially proposed in ideal conditions, as the solution to the following equation

\begin{equation} \label{eq:ideal_local_CACC_control}
		\dot{u}_i(t) = \frac{1}{h} \left[ -u_i(t) + \left( k_p e_{i}(t) +k_d \dot e_{i}(t)\right) + u_{i-1}(t) \right].
\end{equation}
As can be seen from Eq.~\eqref{eq:ideal_local_CACC_control}, the local control law depends on measured quantities, such as the relative position and velocity, which will be corrupted by noise. Furthermore, the control law depends on the intended acceleration of the preceding vehicle, $u_{i-1}(t)$, which shall be received through a wireless V2V communication network.

In this paper the presence of measurement uncertainties and non-ideal communication are explicitly incorporated in the control law giving
\begin{equation} \label{eq:local_CACC_control}
		\dot{u}_i(t) = \frac{1}{h} \left[ - u_i(t) + \left( k_p \hat e_{i}(t) +k_d \hat {\dot e}_{i}(t)\right) + \tilde u_{i-1}(t) \right],
\end{equation}
where $\hat e_{i} \triangleq e_{i} + \eta_{i,(1)} - h \xi_{i,(2)}$, $\hat {\dot e}_{i} \triangleq \dot e_{i} + \eta_{i,(2)} - h \xi_{i,(3)}$, and $\tilde u_{i-1}(t)= u_{i-1}(t) + \Delta u_{i-1}(t)$ is the last received value of $u_{i-1}(t)$. $\Delta u_{i-1}$ will be further defined in subsection \ref{ssec:attackeffect}.

By following similar steps as in \cite{Ploeg2011-nc} and \cite{JAHANSHAHI2018212}, we can write the  $i$--th vehicle error dynamics, under control law~\eqref{eq:local_CACC_control},  as
\begin{equation}\label{eq:error_dynamics}
\mathcal{E}_i:\left\{
    \begin{aligned}
        \dot x_{e_i}(t) & = A_e x_{e_i}(t) + B_e \zeta_i(t)\\
        y_{e_i}(t)  & = C_e x_{e_i}(t) + D_e \zeta_i(t)
    \end{aligned} 
\right.\,, \\
\end{equation}
where $C_e = D_e$ and the following quantities were introduced
\begin{equation}\begin{gathered}\label{eq:error_dynamics_definitions}
\hspace{-0.15cm}A_e \triangleq
\begin{bmatrix} 
0  & 1& 0\\
0 & 0 & 1\\
-\frac{k_p}{\tau}&-\frac{k_d}{\tau}  &  -\frac{1}{\tau}
\end{bmatrix}, \;
B_e \triangleq
\begin{bmatrix} 
0  & 0& 0\\
0 & 0 & 0\\
-\frac{k_p}{\tau}&-\frac{k_d}{\tau}  &  -\frac{1}{\tau}
\end{bmatrix} \\
\hspace{-0.5cm}C_e \triangleq \begin{bmatrix} 
1 &\!0\\
0 &\!1\\
0 &\!0
\end{bmatrix}^\top \!\!\!, \;
x_{e_i} \!\triangleq\!
\begin{bmatrix} 
e_i(t)\\
\dot{e}_i(t)\\
\ddot{e}_i(t)
\end{bmatrix}, \;
\zeta_i \!\triangleq\! \begin{bmatrix}
\eta_{i,(1)} - h \xi_{i,(2)}\\
\eta_{i,(2)} - h \xi_{i,(3)}\\
\Delta u_{i-1}(t)
\end{bmatrix}
\end{gathered}\end{equation}
The stability and performance of the error dynamics $\mathcal E_i$ and the string-stability of the platoon have been analysed in \cite{Ploeg2011-nc} and \cite{Dolk2017-wa}. As the present paper is concerned with the design of a cyber-attack detection and estimation scheme, and not the event-triggered CACC control scheme itself, for well-posedness we will require the following
\begin{assumption}\label{ass:stability}
Control law $u_i$ (Eq.~\eqref{eq:local_CACC_control}) and triggering condition $\sigma$ (Eq.~\eqref{eq:triggering_condition}) are chosen such that, without cyber-attacks and when Assumption~\ref{ass:measurement_uncertainties} holds, $\mathcal E_i$ is stable for each vehicle $i$ and string stability of the platoon is guaranteed.
\end{assumption}
\subsection{Attack and communication-induced effects}
\label{ssec:attackeffect}
\noindent In this paper, following 
\cite{Dolk2017-wa,Lyamin2018-ht,European_Telecommunications_Standards_Institute2014-iz}, the transmission of $u_{i-1}$ is assumed to be event triggered. Furthermore a {\em man-in-the-middle attack} on the transmitted $u_{i-1}$ is considered. We are not interested here in the actual implementation of the attack, for this, one can refer to \cite{Studnia2013-dk,Miller2014-op,Amoozadeh2015-qp,Ploeg2017-zj}. For the observer, the effects of communication, $\Delta u_{i-1,C}(t)$, and the attack, $\phi_i(t)$, will be combined in $\Delta u_{i-1}(t) = \Delta u_{i-1,C}(t) + \phi_i(t)$. 

The event-triggered communication causes a variable delay in the signal received by car $i$, defined as
\begin{equation}
\tau_0 = 0, \tau_{l+1} \triangleq \inf \left\{ t\geq \tau_l : \, \sigma = 1 \right\},
\end{equation}
where $\tau_l$ is the last transmission time, and $\sigma$ is a triggering condition based on the local measurements, $y_{i-1}$, in car $i-1$:
\begin{equation}
\begin{aligned}
\sigma &\triangleq (t-\tau_l\geq T_H \lor (t-\tau_l>T_L \land\\
&\hspace{-0.2cm}\exists j=\{1,2\} : \,\vert y_{i-1,(j)}(\tau_l) - y_{i-1,(j)}(t)\vert \geq \Delta y_{L,(j)})).
\end{aligned}
\label{eq:triggering_condition}
\end{equation}
Here $T_L$, $T_H$ and $\Delta y_L \in \R^2$ are user-designed parameters that define, respectively, the minimum and maximum inter-triggering times, and the threshold for communication.

In summary, communication is triggered on changes in local measurements of car $i-1$ since the last communication. This is combined with a minimum and maximum inter-triggering time. The error introduced by the event-triggered communication is denoted by $\Delta u_{i-1,C}(t)$.
\section{Sliding Mode observer}\label{sec:SMO}
\noindent In this section a {\em Sliding Mode Observer} (SMO) for the dynamics $\mathcal E_i$ in eq. \eqref{eq:error_dynamics} is presented. To this end, first the change of variables $z_{1,i} =
\begin{bmatrix}
x_{e_i,(1)}\\
x_{e_i,(2)}
\end{bmatrix},~
\zeta_{1,i} = \begin{bmatrix}
\zeta_{i,(1)}\\
\zeta_{i,(2)}
\end{bmatrix},~z_{2,i} = x_{e_i,(3)},~ b = -\frac{1}{\tau}$ is performed in order to separate the measured and unknown states, giving:
\begin{equation}\label{eq:reduced_linsys}
\left[\begin{matrix}
\dot{z}_{1,i} \\
\dot{z}_{2,i}
\end{matrix}\right]
=
\left[\begin{matrix}
A_{11}&A_{12}\\
A_{21}&A_{22}
\end{matrix}\right]
\left[\begin{matrix}
{z}_{1,i}\\
{z}_{2,i}
\end{matrix}\right]
+
\left[\begin{matrix}
0\\
A_{21}\zeta_{1,i} + b \Delta u_{i-1}
\end{matrix}\right],
\end{equation}
\begin{equation}
    y_{e_i} = z_{1,i} + \zeta_{1,i}.
    \label{eq:gamma}
\end{equation}
An observer design is presented, in eqs. \eqref{eq:observer} and \eqref{eq:pseudocontrol}, to make the states slide along $\epsilon_{y,i}(t) = 0$ even in the presence of noise-, communication- and attack-induced effects.
\begin{equation}
\left[\begin{matrix}
\hat{\dot{z}}_{1,i} \\
\hat{\dot{z}}_{2,i}
\end{matrix}\right]
=
\left[\begin{matrix}
A_{11}&A_{12}\\
A_{21}&A_{22}
\end{matrix}\right]
\left[\begin{matrix}
\hat{z}_{1,i}\\
\hat{z}_{2,i}
\end{matrix}\right]
-
\left[\begin{matrix}
\nu_i\\
0
\end{matrix}\right]
\label{eq:observer}
\end{equation}
\begin{equation}\label{eq:pseudocontrol}
\nu_i(t)=(A_{11}+P)\epsilon_{y,i}(t)+M_i\text{sgn}(\epsilon_{y,i}(t))
\end{equation}
Here $M_i$ is a positive constant, and $P \in \R^{2\times2}$ is a positive definite matrix. Both are chosen to they verify the hypothesis of Theorem \ref{theorem:bound_eps1}, to guarantee the SMO stability. The observer error dynamics can be written as in eqs.  \eqref{eq:obs_error}, \eqref{eq:obs_error_dyn}.
\begin{equation}\label{eq:obs_error}
\begin{matrix}
\epsilon_{1,i}(t)=\hat{z}_{1,i}(t)-z_{1,i}(t)\\
\epsilon_{2,i}(t)=\hat{z}_{2,i}(t)-z_{2,i}(t)\\
\epsilon_{y,i}(t)=\hat{z}_{1,i}(t)-(z_{1,i}(t)+\zeta_{1,i}) = \epsilon_{1,i}(t)-\zeta_{1,i}
\end{matrix}
\end{equation}
\begin{equation}\label{eq:obs_error_dyn}
\hspace{-0.05cm}\dot{\epsilon}_i(t) = \left[\begin{matrix}
A_{11}&A_{12}\\
A_{21}&A_{22}
\end{matrix}\right]\epsilon_i(t) -
\left[\begin{matrix}
\nu_i(t)\\
A_{21}\zeta_{1,i}(t) + b\Delta u_{i-1}(t)
\end{matrix}\right]
\end{equation}
\begin{theorem}
\label{theorem:bound_eps1}
$\epsilon_{1,i}(t)$, under the observer dynamics in \eqref{eq:obs_error_dyn}, can be bounded by $\bar{\epsilon}_{1}=\bar{\zeta}$ if $M_i>\left|A_{12} \bar{\epsilon}_{2,i} \right| +\left|A_{11}\bar{\zeta} \right|$.
\end{theorem}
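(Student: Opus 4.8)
The plan is to treat $\epsilon_{y,i}(t)=0$ as the intended sliding surface: I will show that under the stated gain condition the output error $\epsilon_{y,i}$ is steered to zero in finite time and kept there, and then read off the bound on $\epsilon_{1,i}$ from the relation $\epsilon_{1,i}=\epsilon_{y,i}+\zeta_{1,i}$ in \eqref{eq:obs_error}, using the componentwise bound $|\zeta_{1,i}|\le\bar\zeta$ induced by Assumption~\ref{ass:measurement_uncertainties}.

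First I would write the dynamics of the output error. Differentiating the last line of \eqref{eq:obs_error} and substituting the first block of \eqref{eq:obs_error_dyn}, the pseudo-control \eqref{eq:pseudocontrol} and $\epsilon_{1,i}=\epsilon_{y,i}+\zeta_{1,i}$, the two $A_{11}\epsilon_{y,i}$ contributions cancel and one obtains $\dot\epsilon_{y,i}= -P\epsilon_{y,i}+A_{11}\zeta_{1,i}+A_{12}\epsilon_{2,i}-M_i\,\mathrm{sgn}(\epsilon_{y,i})-\dot\zeta_{1,i}$. I would then use $V_i\triangleq\tfrac12\epsilon_{y,i}^\top\epsilon_{y,i}$ as a reaching-phase Lyapunov function; with $\epsilon_{y,i}^\top\mathrm{sgn}(\epsilon_{y,i})\ge|\epsilon_{y,i}|$, the Cauchy--Schwarz inequality on the cross terms, and the positive definiteness of $P$, this yields $\dot V_i\le -\epsilon_{y,i}^\top P\epsilon_{y,i}+\big(|A_{11}\zeta_{1,i}|+|A_{12}\epsilon_{2,i}|-M_i\big)|\epsilon_{y,i}|$ (setting the $\dot\zeta_{1,i}$ term aside for the moment).

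Bounding $|\zeta_{1,i}|\le\bar\zeta$ by Assumption~\ref{ass:measurement_uncertainties} and $|\epsilon_{2,i}|\le\bar\epsilon_{2,i}$, the hypothesis $M_i>|A_{12}\bar\epsilon_{2,i}|+|A_{11}\bar\zeta|$ makes the scalar coefficient of $|\epsilon_{y,i}|$ strictly negative, so $\dot V_i\le -\epsilon_{y,i}^\top P\epsilon_{y,i}-\eta\,|\epsilon_{y,i}|\le-\eta\sqrt{2V_i}$ with $\eta\triangleq M_i-|A_{12}\bar\epsilon_{2,i}|-|A_{11}\bar\zeta|>0$. By the usual finite-time-reachability (comparison-lemma) argument, $\epsilon_{y,i}$ reaches $0$ in finite time, and $\dot V_i<0$ for $\epsilon_{y,i}\neq0$ keeps it there afterwards. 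On the sliding surface $\epsilon_{1,i}=\zeta_{1,i}$, hence $|\epsilon_{1,i,(j)}(t)|\le\bar\zeta_{(j)}(t)$ for each $j$, i.e. $\bar\epsilon_1=\bar\zeta$, which is the claimed bound.

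I expect the main obstacle to be the derivative term $\dot\zeta_{1,i}$ appearing in $\dot\epsilon_{y,i}$: it is not dominated by the gain condition as stated, so one must either strengthen Assumption~\ref{ass:measurement_uncertainties} to a bound on $\dot\zeta_{1,i}$ and absorb it into $\eta$ (which downgrades ideal sliding to practical sliding and slightly inflates $\bar\epsilon_1$), or argue it is negligible on the switching time-scale. A secondary gap to close is the a~priori boundedness $|\epsilon_{2,i}|\le\bar\epsilon_{2,i}$ used above: since $A_{22}=b=-1/\tau$ is Hurwitz and $\Delta u_{i-1}$ is bounded, the $\epsilon_{2,i}$ subsystem in \eqref{eq:obs_error_dyn}, driven by the bounded signals $\epsilon_{y,i}$ and $\Delta u_{i-1}$, is input-to-state stable, so such a bound exists; since $\bar\epsilon_{2,i}$ already appears in the hypothesis, this should be recorded as part of the well-posedness rather than re-derived in the proof.
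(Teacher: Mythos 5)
Your overall strategy --- establish ideal sliding on $\epsilon_{y,i}=0$ in finite time and then read off $\epsilon_{1,i}=\zeta_{1,i}$ --- is not the paper's route, and the obstacle you flag at the end is not a removable technicality but the reason this route fails under the stated hypotheses. Because $\epsilon_{y,i}=\epsilon_{1,i}-\zeta_{1,i}$ depends algebraically on the measurement uncertainty (eq.~\eqref{eq:gamma}), its time derivative contains $\dot\zeta_{1,i}$, and Assumption~\ref{ass:measurement_uncertainties} bounds only the amplitude of $\zeta_{1,i}$, not its derivative (in the paper's simulations $\zeta_{1,i}$ is white noise, for which no derivative bound exists). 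Hence the reaching inequality $\dot V_i\le-\eta\sqrt{2V_i}$ cannot be closed, an ideal sliding mode on $\epsilon_{y,i}=0$ generally does not exist, and the identity $\epsilon_{1,i}=\zeta_{1,i}$ that you use to conclude is exactly the step that is unavailable. Both repairs you suggest change the theorem: adding a bound on $\dot\zeta_{1,i}$ is a hypothesis the statement does not contain, and practical sliding yields $\bar\epsilon_1=\bar\zeta+\delta$ for some $\delta>0$ rather than the claimed $\bar\epsilon_1=\bar\zeta$.

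The paper avoids the issue by never differentiating $\epsilon_{y,i}$. It argues directly on the state error $\epsilon_{1,i}$, whose dynamics (first block of \eqref{eq:obs_error_dyn} with \eqref{eq:pseudocontrol} substituted) involve $\zeta_{1,i}$ only algebraically, as in eq.~\eqref{eq:obs_err_dyn_12}: $\dot\epsilon_{1,i}=P(\zeta_{1,i}-\epsilon_{1,i})+A_{11}\zeta_{1,i}+A_{12}\epsilon_{2,i}-M_i\,\mathrm{sgn}(\epsilon_{y,i})$. The argument is then an invariance (barrier) argument rather than a reaching argument: if $\epsilon_{1,i}>\bar\zeta$, then necessarily $\epsilon_{y,i}=\epsilon_{1,i}-\zeta_{1,i}>0$, so the sign of the switching term is known, the $P$-term is nonpositive, and the gain condition $M_i>\left|A_{12}\bar\epsilon_{2,i}\right|+\left|A_{11}\bar\zeta\right|$ forces $\dot\epsilon_{1,i}<0$; hence the band $\{|\epsilon_{1,i}|\le\bar\zeta\}$ cannot be left (the lower bound is symmetric). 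This establishes containment in the band without ever claiming $\epsilon_{y,i}\to 0$. Your closing remark on treating $\bar\epsilon_{2,i}$ as part of well-posedness is consistent with the paper, which defers that bound to the appendix. If you restructure your proof as the invariance argument above, it goes through with only the stated assumptions.
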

\begin{proof}
This proof will only consider the upper bound of $\epsilon_{1,i}(t)$, the lower bound can be proved in a similar manner. It will be proven that if $\epsilon_{1,i}>\bar{\zeta}$, then $\dot{\epsilon}_{1,i}<0$. This is sufficient to prove \hspace{-0.07cm}$\bar{\zeta}\geq \epsilon_{1,i}~\forall~ t$. First note that $\epsilon_{1,i}>\bar{\zeta}$ implies $\epsilon_{y,i}>0$, so the first row of eq. \eqref{eq:obs_error_dyn} can be rewritten to
\begin{equation}\label{eq:obs_err_dyn_12}
\dot{\epsilon}_{1,i} = P(\zeta_{1,i} - \epsilon_{1,i}) + A_{11} \zeta_{1,i} + A_{12} \epsilon_{2,i} - M_i
\end{equation}
Substituting the condition on $M_i$ gives
\begin{equation}
\begin{aligned}
\dot{\epsilon}_{1,i} <& P(\zeta_{1,i} - \epsilon_{1,i})+ (A_{11} \zeta_{1,i} - |A_{11} \bar{\zeta}|)\\
 & + (A_{12} \epsilon_{2,i}-|A_{12} \bar{\epsilon}_{2,i}|) \leq 0
\end{aligned}
\end{equation}
$\bar{\zeta}$, $\bar{\epsilon}_{2,i}$ and other bounds are proven in the appendix.
\end{proof}

In this paper, as in \cite{Edwards2000-vs} and subsequent works on SMO-based fault estimation, the EOI, derived from $\nu_i$, will be used for estimating attacks \cite{Edwards2000-vs}. The EOI used here will be obtained from the filter in eq. \eqref{eq:nu_filter} \cite{Utkin1992-td}.
\begin{equation}\label{eq:nu_filter}
\nu_{i,fil} = \frac{K}{s+K} \nu_i \, ,
\end{equation}
where $K>0$ is a design constant and $s$ is the Laplace domain complex variable.
\section{Attack Detection Thresholds}\label{sec:thresholds}%
\noindent As a novel contribution, we are introducing two pairs of robust attack detection thresholds on $\nu_{i,fil}$, which are guaranteed against false alarms, even in the presence of measurement uncertainties and event-triggered communication. Each pair will comprise an upper and a lower bound on the values of $\nu_{i,fil}$ in non-attacked conditions. The two pairs are termed {\em One-Switch-Ahead} (OSA) and {\em Multiple-Switches-Ahead} (MSA) thresholds, for reasons that will be apparent in next sections. For the sake of clarity, in Subsections~\ref{sub:one_switch_ahead_thresh} and \ref{sub:multiple_switch_ahead_thresh} we will assume there is no event-triggered communication, i.e. $\Delta u_{i-1,C}(t) = 0$. The effects of its presence on the thresholds will be illustrated in Subsection~\ref{sub:event_triggered_thresh}.

For the sake of notation, we will assume that the SMO is initialized at time $t_0$, and that $\text{sgn}(\epsilon_{y,i}(t_0)) = 1$. This means that between $t_0$ and the next switch at $t_1$, and all following odd intervals $[t_{2k}~t_{2k+1}]$, with $k\in \mathbb N$, the discontinuous term $\nu_i$ and $\epsilon_{y,i}(t)$ are positive, $\nu_{i,fil}$ will be increasing, and $\epsilon_{1,i}(t)$ will be decreasing. This is also shown in Figure~\ref{fig:thresholds}. Furthermore $\nu_{i,fil}$ will be initialised at $\nu_{i,fil}(t_0) = 0$ and we will denote a threshold value calculated at $t_k$ by $\bar{\nu}_{i,fil}(t_{k})$. For brevity, we will derive only the upper bound of each threshold, which is of interest in the odd time intervals, as the lower bounds and the behaviour during even time intervals can be obtained via similar reasoning.
\subsection{One-Switch-Ahead (OSA) Threshold}\label{sub:one_switch_ahead_thresh}
\noindent Let us consider the behaviour of $\nu_{i,fil}$ during the odd interval, $[t_{2k} \; t_{2k+1}]$ (see Figure~\ref{fig:thresholds}a). By introducing, in eq. \eqref{eq:max_pseudo_control}, the upper bound $\bar \nu$ on $\nu_i$, the time domain solution to \eqref{eq:nu_filter} can be upper bounded during the interval as in eq. \eqref{eq:nu_filter_resp}.
\begin{equation}\label{eq:max_pseudo_control}
\bar{\nu} = \left|(A_{11}+P)(\bar{\epsilon}_{1}+\bar{\zeta})\right| + M_i
\end{equation}
\begin{equation}\label{eq:nu_filter_resp}
\nu_{i,fil}(t) \leq e^{-K(t-t_{2k})} \nu_{i,fil}(t_{2k}) + (1-e^{-K(t-t_{2k})})\bar{\nu}
\end{equation}
\begin{remark}
	\hspace{-0.17cm} The right-hand side of eq. \eqref{eq:nu_filter_resp} is an upper bound for $\nu_{i,fil}(t)$. However, it can be easily proved that the inequality in eq. \eqref{eq:nu_filter_resp} will also hold in case of an attack. Therefore, it is not a valid threshold for attack detection.
\end{remark} 

Next, in eq. \eqref{eq:nu_filter_resp}, the hypothetical maximum time between switches $\bar{t} = \max(t_{2k+1}-t_{2k})$ can be defined as an upper bound for $t$. It will be shown in the following that this bound can be exceeded in case of an attack, and therefore eq. \ref{eq:threshold} is a valid threshold for attack detection. 
\begin{equation}\label{eq:threshold}
\bar{\nu}_{i,fil,OSA}(t_{2k}) = e^{-K\bar{t}} \nu_{i,fil}(t_{2k}) + (1-e^{-K\bar{t}})\bar{\nu} \, ,
\end{equation}
$\bar{t}$ corresponds to the longest time for which $\epsilon_{y,i}= \epsilon_{1,i} -\zeta_{1,i}$ can stay positive. This is the case when $\epsilon_{1,i}$ decreases from its maximum value, $\bar{\epsilon}_1$, to its minimum value, $-\bar{\epsilon}_1$, with a minimum rate $\underline{\dot{\epsilon}}_{1}=\min(\left|\dot{\epsilon}_{1,i}\right|)$. Note that, for this to happen, $\zeta_{1,i}<\epsilon_{1,i}$ during the whole time. This is visualised in Figure~\ref{fig:thresholds}b and results in the following expression for $\bar{t}$
\begin{equation}
\bar{t} = \frac{2\bar{\epsilon}_1}{\underline{\dot{\epsilon}}_1}
\label{eq:max_dwelltime}
\end{equation}
The bounds, $\bar{\epsilon}_1$, $\underline{\dot{\epsilon}}_1$, and $\bar{\zeta}$ are derived in theorem 1, Appendices \ref{app:bound_eps1_dot} and \ref{app:bound_noise} respectively, and shown in eqs. \eqref{eq:bound_eps1_zeta}-\eqref{eq:bound_eps2}.
\begin{equation}\label{eq:bound_eps1_zeta}
    \bar{\epsilon}_1 = \bar{\zeta} = 
    \left[\begin{matrix}
    \bar{\eta}_{i,(1)} + h \bar{\xi}_{i,(2)}\\
    \bar{\eta}_{i,(2)} + h \bar{\xi}_{i,(3)}
    \end{matrix}\right]
\end{equation}
\begin{equation}\label{eq:bound_eps1dot}
\underline{\dot{\epsilon}}_1 = -\left|A_{12} \bar{\epsilon}_{2,i} \right| + M_i
\end{equation}
\begin{equation}\label{eq:bound_eps2}
\bar{\epsilon}_{2,i} = \epsilon_{2,i,0} e^{A_{22}t} - \frac{2 A_{21}\bar{\zeta}-b\Delta u_{i-1}}{A_{22}}
\end{equation}
One can see in eq. \ref{eq:bound_eps2} that $\bar{\epsilon}_{2,i}$ depends on the attack. The threshold is designed assuming no attack, so $\Delta u_{i-1}=0$. Therefore, it is easy to check that if there is an attack, $\epsilon_{2,i}$ can become bigger than $\bar{\epsilon}_{2,i}$ (with $\Delta u_{i-1}=0$). Therefore eq. \ref{eq:threshold} is a valid threshold for attack detection. 

At $t_{2(k+1)}$ this threshold needs to be recalculated using a new initial value of $\nu_{i,fil}(t_{2(k+1)})$, as illustrated in Figure~\ref{fig:thresholds}. This re-initialisation on the signal the threshold is attempting to bound leads to inconsistent detection. Even though an attack can cause detection between recalculations, it is also dependent on the noise behaviour. As before, $\zeta_{1,i}<\epsilon_{1,i}$ needs to hold during $\bar{t}$ for the threshold to be reached, and even though this chance is non-zero in case of an attack, in every period $[t_{2k} \; t_{2k+1}]$ there is a large chance an attack is not detected. Therefore in the next section a threshold is designed that is not dependent on $\nu_{i,fil}$.
\subsection{Multiple-Switches-Ahead (MSA) Threshold}\label{sub:multiple_switch_ahead_thresh}
\noindent The MSA threshold is based on the possible behaviour of $\nu_{fil}$ over more than one switch ahead in time, after a hypothetical occurrence of the worst case behaviour considered for the OSA threshold.
\vspace{-0.6cm}
\begin{figure}[H]
	\centering
	\includegraphics[width=0.9\columnwidth,height=0.7\columnwidth,trim={0.2cm 0.1cm 0.1cm 0cm},clip]{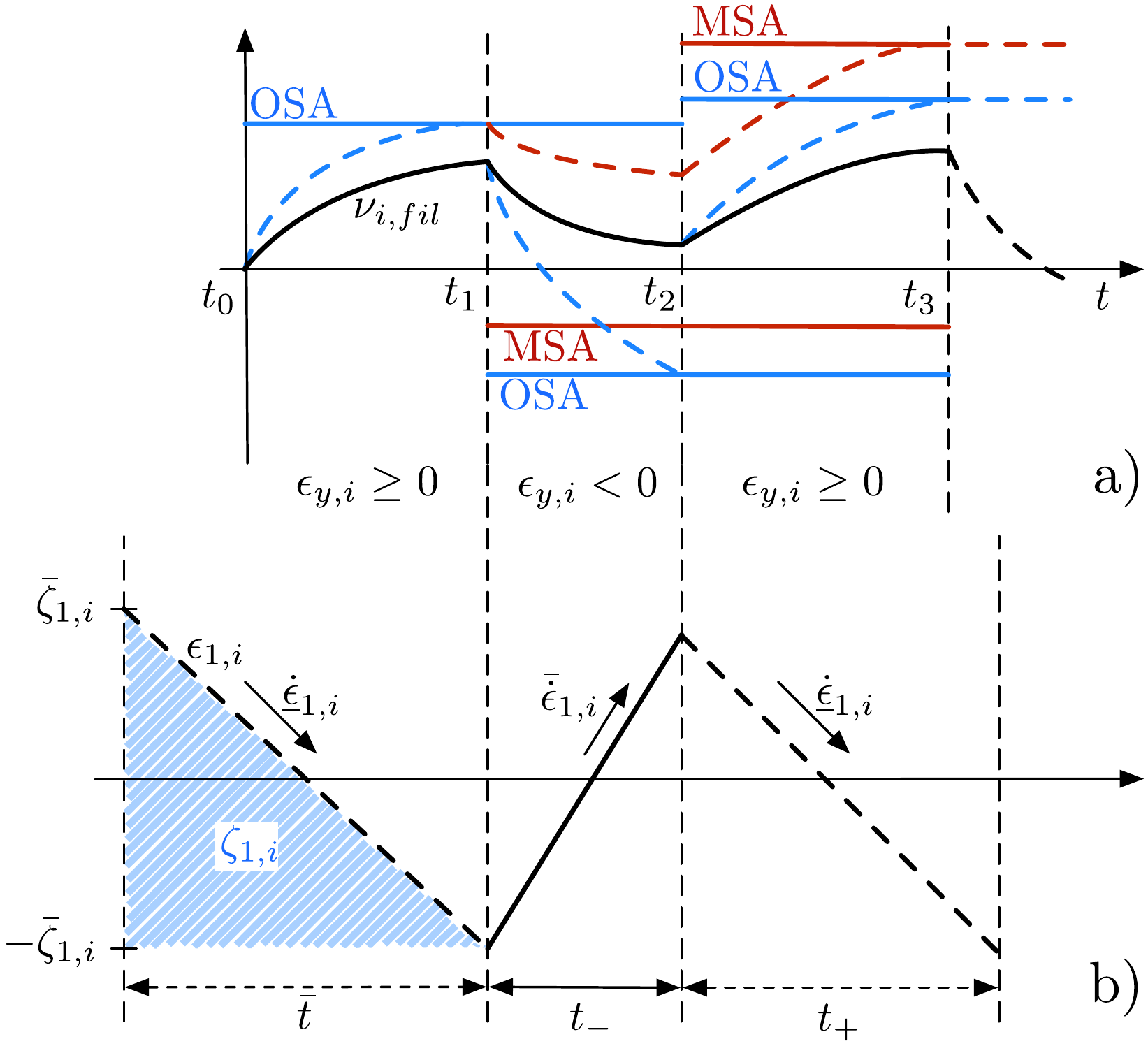}
	\caption{A graphical illustration of the OSA and MSA threshold computations for the first three switching periods of the SMO. a) The EOI, the OSA and the MSA thresholds are drawn with solid lines, while the terms in eqs. \eqref{eq:nu_filter_resp} and \eqref{eq:nu_bar_more} is drawn with dashed lines b) $\bar t$ and $t_+$ are hypothetical maxima calculated at $t_0$ and $t_2$ respectively, and $t_-$ is a measured time. The behaviour of $\epsilon_{1,i}$ is that of the worst case for each time period. These worst cases are used in calculating the thresholds.}
	\label{fig:thresholds}
\end{figure}
\vspace{-0.3cm}
Right before the hypothetical OSA switch at $t_{2k+1}$, $\epsilon_{1,i}=\zeta_{1,i}=-\bar{\zeta}$. At this moment there is a guaranteed switch, and $\text{sign}(\epsilon_{y,i})=-1$, $\dot{\epsilon}_{1,i}>0$, and $\epsilon_{1,i}$ will increase during a period lasting $t_-=t_{2(k+1)}-t_{2k+1}$. At $t_{2(k+1)}$, there will be another switch, making $\epsilon_{1,i}$ decreasing over a hypothetical period $t_+$. Figure~\ref{fig:thresholds}b shows this behaviour for $k=0$.

The MSA threshold considers the largest value that the upper bound $\bar{\nu}_{i,fil}(t_{2(k+1)})$ could attain, starting from a previously computed upper bound $\bar{\nu}_{i,fil}(t_{2k})$, after a known period $t_-$ and a hypothetical period $t_+$. This is shown in eq. \eqref{eq:nu_bar_more}, where $\underline{\nu} = -\left|(A_{11}+P)(\bar{\epsilon}_{1}+\bar{\zeta})\right| + M_i$.
\begin{equation}\label{eq:nu_bar_more}
\begin{aligned}
\bar{\nu}_{i,fil,MSA}(t_{2(k+1)})&= e^{-Kt_+}(e^{-Kt_-} \bar{\nu}_{i,fil}(t_{2k}) \\
                    & \hspace{-0.2cm}  - (1-e^{-Kt_-})\underline{\nu}) + (1-e^{-Kt_+})\bar{\nu}
\end{aligned}
\end{equation}
One can see from eq. \eqref{eq:nu_bar_more} that $\bar{\nu}_{i,fil,MSA}$ is maximal for a big $t_+$. This is the case if $\dot{\epsilon}_{1,i}=\bar{\dot{\epsilon}}_{1,i}$ during $t_-$ and $\dot{\epsilon}_{1,i}=-\underline{\dot{\epsilon}}_{1,i}$ during $t_+$. As $t_-$ is known, the maximum value of $\epsilon_{1,i}$ that can be reached in this time can be calculated as  $\epsilon_{1,i}(t_{2k+1}+t_-)=-\bar{\zeta}+\bar{\dot{\epsilon}}_{1,i}t_-$. The maximum $t_+$ can then be expressed in terms of $t_-$ as in eq. \eqref{eq:eps_ratio}. Here $\bar{\dot{\epsilon}}_{1,i}$ is defined in Appendix \ref{app:bound_eps1_dot} and stated in eq. \eqref{eq:epsdot_bound}.
\begin{eqnarray}
\label{eq:eps_ratio}
t_+ = \frac{\epsilon_{1,i}(t_{2k+1}+t_-)--\bar{\zeta}}{\underline{\dot{\epsilon}}_{1,i}} = \frac{\bar{\dot{\epsilon}}_{1,i}}{\underline{\dot{\epsilon}}_{1,i}} t_-\\
\label{eq:epsdot_bound}
\bar{\dot{\epsilon}}_1 = \left|P (\bar{\zeta} + \bar{\epsilon}_1)\right| + \left|A_{12} \bar{\epsilon}_{2,i}\right| + \left|A_{11}\bar{\zeta}\right| + M_i
\end{eqnarray}
Finally, the threshold that is used to detect an attack is defined as in eq. \eqref{eq:final_threshold}. As both thresholds are guaranteed to have no false attack detection, the combined threshold will also guarantee this. Furthermore by taking the minimum of both thresholds, the threshold is made less conservative:
\begin{equation}
\bar{\nu}_{i,fil}(t_{2k})=\min(\bar{\nu}_{i,fil,OSA}(t_{2k}),\bar{\nu}_{i,fil,MSA}(t_{2k})).
\label{eq:final_threshold}
\end{equation}
\begin{theorem}
$\bar{\nu}_{i,fil}$ only depends on $\nu_{i,fil}$ when it will result in a lower threshold then if it doesn't depend on $\nu_{i,fil}$.
\end{theorem}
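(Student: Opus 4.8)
The plan is to read the statement as a direct consequence of the definition of the combined threshold in \eqref{eq:final_threshold}. The first step is to record which of the two ingredients of the minimum actually contains $\nu_{i,fil}$: the OSA value \eqref{eq:threshold} is an affine function of the measured $\nu_{i,fil}(t_{2k})$ with the strictly positive coefficient $e^{-K\bar t}$, whereas the MSA value \eqref{eq:nu_bar_more} is built only from the previously held threshold, the measured dwell time $t_-$, the hypothetical dwell time $t_+$ fixed by \eqref{eq:eps_ratio}, and the constants $\bar\nu$ and $\underline\nu$ — it does not see $\nu_{i,fil}(t_{2k})$ at all. Hence the event ``$\bar\nu_{i,fil}(t_{2k})$ depends on $\nu_{i,fil}(t_{2k})$'' is, by construction, exactly the event ``the minimum in \eqref{eq:final_threshold} is attained by the OSA branch'', i.e.\ $\bar\nu_{i,fil,OSA}(t_{2k})\le\bar\nu_{i,fil,MSA}(t_{2k})$.

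With this identification, the proof proceeds by a simple dichotomy. If $\bar\nu_{i,fil,OSA}(t_{2k})\le\bar\nu_{i,fil,MSA}(t_{2k})$, then $\bar\nu_{i,fil}(t_{2k})=\bar\nu_{i,fil,OSA}(t_{2k})$ depends on $\nu_{i,fil}$ and equals the smaller of the two admissible values, so it is no larger — strictly smaller when the inequality is strict — than the $\nu_{i,fil}$-independent MSA value it supplants. If instead $\bar\nu_{i,fil,OSA}(t_{2k})>\bar\nu_{i,fil,MSA}(t_{2k})$, then $\bar\nu_{i,fil}(t_{2k})=\bar\nu_{i,fil,MSA}(t_{2k})$ and the dependence on $\nu_{i,fil}$ is absent. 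The same argument applies verbatim to the lower thresholds and to the even intervals, by the symmetry already used throughout Section~\ref{sec:thresholds}.

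The one point that requires care is the recursion in \eqref{eq:nu_bar_more}: its seed is the combined threshold at the previous switch, which may itself have been the OSA value and therefore carry an \emph{indirect} dependence on $\nu_{i,fil}$ at earlier times. I would first fix the convention that the statement refers to direct dependence at the current switching instant, and then argue by induction on $k$ that the ``dependence only when it helps'' property propagates: the map \eqref{eq:nu_bar_more} is strictly increasing in its seed (coefficient $e^{-K(t_-+t_+)}>0$), so a seed that was retained at $t_{2(k-1)}$ only because it was the smaller option produces at $t_{2k}$ an MSA value that is itself no larger than the alternative, and the base-case dichotomy then closes the step.

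I expect the main obstacle to be expository rather than technical: making precise the intended meaning of ``depends on $\nu_{i,fil}$'' once the recursive seeding in \eqref{eq:nu_bar_more} is taken into account, and deciding whether the statement is meant purely instant-by-instant or along the whole switching sequence. Once that is settled, everything follows from elementary properties of the minimum and the monotonicity of \eqref{eq:nu_bar_more}; none of the bounds from Theorem~\ref{theorem:bound_eps1} or the appendices enter the argument.
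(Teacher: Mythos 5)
Your proposal is correct and follows essentially the same route as the paper's own proof: both rest on the dichotomy that the minimum in \eqref{eq:final_threshold} selects the $\nu_{i,fil}$-dependent OSA branch exactly when it is the lower of the two, and both then address the recursive seeding of \eqref{eq:nu_bar_more}. The only minor difference is in how the seed of the recursion is disposed of --- the paper invokes the initialisation $\nu_{i,fil}(t_0)=0$ so that $\bar{\nu}_{i,fil,OSA}(t_0)$ carries no actual dependence, whereas you argue by induction using the monotonicity of \eqref{eq:nu_bar_more} in its seed, which is if anything the more careful treatment.
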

\begin{proof}
If $\bar{\nu}_{i,fil}(t_{2k}) = \bar{\nu}_{i,fil,MSA}(t_{2k})$ the threshold will not be dependent on $\nu_{i,fil}$. At every $t_{2(k+1)}$, except $t_2$, a new threshold will be calculated that is only dependent on $\bar{\nu}_{i,fil,MSA}(t_{2k})$. At $t_2$ no previous $\bar{\nu}_{i,fil,MSA}$ is available, so $\bar{\nu}_{i,fil,OSA}(t_0)$ is used. In general $\bar{\nu}_{i,fil,OSA}(t_{2k})$ is dependent on $\nu_{i,fil}(t_{2k})$, however $\bar{\nu}_{i,fil,OSA}(t_0)$ is dependent on $\nu_{i,fil}(t_0)$, which is defined to be 0. This can be done as $\nu_{i,fil}$ is the result of a first order filter that can be initialised.

If $\bar{\nu}_{i,fil}(t_{2k})=\min(\bar{\nu}_{i,fil,OSA}(t_{2k}),\bar{\nu}_{i,fil,MSA}(t_{2k}))$, $\bar{\nu}_{i,fil}(t_{2k})$ will be calculated based on $\bar{\nu}_{i,fil,MSA}(t_{2(k-1)})$ or a lower $\bar{\nu}_{i,fil,OSA}(t_{2(k-1)})$. Furthermore, $\bar{\nu}_{i,fil,OSA}(t_{2k})$ will only become the threshold if it is lower then the $\bar{\nu}_{i,fil,MSA}(t_{2k})$. These statements prove the theorem.
\end{proof}
\subsection{Threshold for Event Triggered Communication}\label{sub:event_triggered_thresh}
\noindent In case of event triggered communication, $\Delta u_{i-1}$ includes both the attack $\phi_i$, and the communication-induced effect $\Delta u_{C,i-1}$ as defined in Section \ref{ssec:attackeffect}. Therefore, without modifications, the observer may falsely detect the communication-induced effect as a cyber attack. The proposed modification to the threshold will prevent this.

The difference between an attack and the event-triggered communication error is that the first will start occurring at communication times $\tau_l$, while the latter will become zero at such times, as an updated value $u_{i-1}$ is received.

Just like the attack, the communication error affects the observer through the dynamics of $\epsilon_{2,i}$, and thus the threshold through $\bar{\epsilon}_{2,i}$ (derived in appendix \ref{app:eps2bound} and stated in eq. \ref{eq:bound_eps2}). This means that for the modified threshold, the increase in $\bar{\epsilon}_{2,i}$ due to $\Delta u_{C,i-1}$ should be taken into account. 

The exact $\Delta u_{C,i-1}$ is not known, however, a worst case scenario exists, assuming there are no local maximums in $u_{i-1}(t)$ between communications. This worst case is when the maximum communication error $\Delta \bar{u}_{C,i-1} \triangleq \tilde{u}_{i-1}(\tau_l)-\tilde{u}_{i-1}(\tau_{l-1})$ occurs constantly since the last communication.

This scenario is implemented by computing all the terms needed for the threshold, using $\bar{\epsilon}_{2,i}$ where $\Delta u_{i-1}=\Delta \bar{u}_{C,i-1}$ for every $t_{2k}$ in the period $[\tau_{l-1} ~ \tau_l]$. These calculations can only be done retroactively, when a communication is received at $\tau_l$. This means that at $\tau_l$ the OSA and MSA thresholds need to be calculated for every $t_{2k}$ in the period $[\tau_{l-1} ~ \tau_l]$.
\section{Attack Estimate} \label{sec:attack_est}
\noindent In this section some preliminary results will be introduced toward the goal of estimating the attack term $\phi$. The method proposed here, is based on \cite{JAHANSHAHI2018212}. This approach is valid only for the case without measurement uncertainty and with continuous observer dynamics. However, in the simulations of section \ref{sec:results} it will be shown that the estimate is also accurate without these assumptions.

First note that without measurement uncertainty $\epsilon_{1,i} = \bar{\epsilon}_1=\bar{\zeta}=[0~0]^\top$, and for continuous observer dynamics this implies $\dot{\epsilon}_{1,i}=\bar{\dot{\epsilon}}_1=[0~0]^\top$. 
Considering the new ideal observer error dynamics, shown in eq. \eqref{eq:obs_err_dyn_ss}, a relation can be found between $\Delta u_{i-1}(t)$ and $\nu_i$. Assuming $\Delta u_{i-1}(t)$ is piecewise constant, the differential equation in the last row of eq. \eqref{eq:obs_err_dyn_ss} can be solved to get eq. \eqref{eq:eps2}.
Substituting this solution in the first row of eq. \eqref{eq:obs_err_dyn_ss}, gives eq. \eqref{eq:nu}.
\begin{equation}
\left[\begin{matrix}
0\\
\dot{\epsilon}_{2,i}(t)
\end{matrix}\right] = \left[\begin{matrix}
A_{11}&A_{12}\\
A_{21}&A_{22}
\end{matrix}\right]\left[\begin{matrix}
0\\
\epsilon_{2,i}(t)
\end{matrix}\right] -
\left[\begin{matrix}
\nu_i(t)\\
b\Delta u_{i-1}(t)
\end{matrix}\right]
\label{eq:obs_err_dyn_ss}
\end{equation}
\begin{equation}
\begin{matrix}
\epsilon_{2,i} \overset{t \to \infty}{=} A^{-1}_{22} b\Delta u_{i-1}(t)
\end{matrix}
\label{eq:eps2}
\end{equation}
\begin{equation}
\nu_i(t) = A_{12}\epsilon_{2,i}(t) = A_{12}A^{ -1}_{22}b\Delta u_{i-1}(t)
\label{eq:nu}
\end{equation}
From this, the estimate for $\Delta u_{i-1}(t)$ can be expressed as in eq. \eqref{eq:dui1}. Here $^+$ indicates the pseudo inverse. Furthermore, as $\nu_i$ is a discontinuous switching term, the EOI $\nu_{i,fil}$ will be used to estimate $\Delta u_{i-1}$ \cite{Edwards2000-vs}.
\begin{equation}
\Delta \hat{u}_{i-1}(t) = b^{-1} A_{22} A^+_{12}\nu_{i,fil}(t)
\label{eq:dui1}
\end{equation}
\section{Simulation Result}
\label{sec:results}
\noindent A CACC-controlled platoon of three vehicles using event triggered communication, equipped with the sliding mode observer presented in this paper, is implemented in Matlab/Simulink. The parameters used in the simulation are shown in tables \ref{tab:parameters1} and \ref{tab:parameters2}. Here the uncertainties are implemented as zero--mean Gaussian white noise, with standard deviations $\sigma_{\eta_i}$ and $\sigma_{\xi_i}$. The simulation scenario considered is shown in Figure \ref{fig:SimScenario}. Results are shown for the same scenario with continuous communication in Figure~\ref{fig:CC_AttackEst} and with event triggered communication in Figure~\ref{fig:ET_AttackEst}.
\vspace{-0.3cm}
\begin{table}[!h]
	\hspace{-0.25cm}\parbox{.45\linewidth}{
		\begin{tabular}{|p{1.1cm}|p{0.9cm}|p{1.25cm}|}
			\hline
			\multicolumn{2}{|c|}{Variable} & Value [unit]\\ \hline
			\vspace{-0.15cm}\multirow{1}{*}{Car} & $\tau$ & $0.1~[-]$\\ \hline
			\multirow{2}{*}{Noise}& $\sigma_{\eta_i}$ & $0.05~[-]$\\ \cline{2-3}
			& $\sigma_{\xi_i}$ & $0.05~[-]$\\ \hline
			\multirow{4}{*}{Network}     & $T_L$ & $0.1~[s]$\\ \cline{2-3}
			& $T_H$ & $1~[s]$\\ \cline{2-3}
			& $\Delta y_{L,(1)}$ & $4~[m]$\\ \cline{2-3}
			& $\Delta y_{L,(2)}$ & $0.5~[m/s]$\\ \hline
			\multicolumn{2}{|c|}{Sim. frequency}& $1000~[Hz]$\\ \hline
		\end{tabular}
		\caption{Simulation Parameters}
		\label{tab:parameters1}
	}
	\hspace{0.5cm}
	\parbox{.45\linewidth}{
		\begin{tabular}{|p{1.4cm}|p{0.6cm}|p{1.25cm}|}
			\hline
			\multicolumn{2}{|c|}{Variable} & Value [unit]\\ \hline
			\multirow{4}{*}{CACC}     & $k_p$ & $0.2~[-]$\\ \cline{2-3}
			& $k_d$ & $0.7~[-]$\\ \cline{2-3}
			& $h$ & $0.7~[s]$\\ \cline{2-3}
			& $r$ & $1.5~[m]$\\ \hline
			\multirow{4}{*}{Observer}& $P$ & $0_{2\times 2}~[-]$\\ \cline{2-3}
			& $M_i$ & $20~[-]$\\ \cline{2-3}
			& $K$ & $2~[-]$\\ \hline
			\multirow{4}{*}{Threshold}& $\epsilon_{2,i,0}$ & $10~[m/s^2]$\\ \cline{2-3}
			& $\bar{\eta}_i$ & $2\sigma_{\eta_i}~[-]$\\ \cline{2-3}
			& $\bar{\xi}_i$ & $2\sigma_{\xi_i}~[-]$\\ \hline
		\end{tabular}
		\caption{Design Parameters}
		\label{tab:parameters2}
	}
\end{table}
\vspace{-1cm}
\begin{figure}[H]
	\centering
	\includegraphics[width=0.9\columnwidth,height = 0.38\columnwidth, trim={0cm 0cm 7cm 20.5cm},clip]{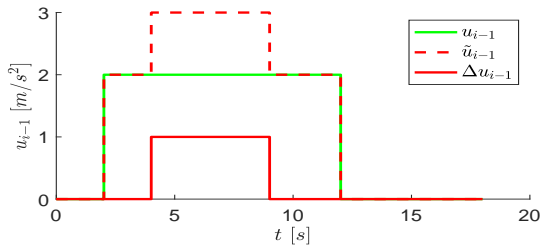}
	\vspace{-0.25cm}
	\caption{Simulation Scenario for attack on communication between car 1 and 2.}
	\label{fig:SimScenario}
\end{figure}
\vspace{-0.2cm}
\noindent The first thing to be noticed in both cases, is the absence of false alarms. Note that the event-triggered threshold in figure~\ref{fig:ET_AttackEst} is only valid at triggering times, indicated with a $*$ marker.

The detection delays in these scenarios are $0.23~[s]$ and $0.6~[s]$, for the Continuous and Event triggered communication respectively. This detection time is scenario specific and depends on many parameters, including the attack and noise magnitudes, and the observer design parameters.

In figure~\ref{fig:ET_AttackEst} two peaks can be seen around $2$ and $12 [s]$. These peaks are caused by the delay in the event triggerred communication. An acceleration is initiated by vehicle 1 at $2.01~[s]$, while the first communication to vehicle 2 is at $2.4~[s]$. During this time there is a nonzero $\Delta u_{i-1,C}$, which the observer will start to estimate.

The attack is introduced at $4.01~[s]$, also asynchronous with the communication, this however has no effect on the observer, as car 2 and the observer will only be affected by to the attack after it has received a communication.

Lastly, note that the threshold converges to a steady state value around $\pm 0.35$, which means that for this scenario all attacks bigger than this will be detected. 
\vspace{-0.3cm}
\begin{figure}[H]
	\centering
	\includegraphics[width=0.9\columnwidth,height = 0.48\columnwidth, trim={0cm 0cm 7cm 18cm},clip]{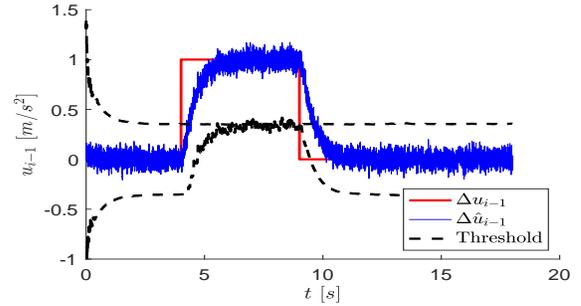}
	\vspace{-0.25 cm}
	\caption{Attack estimation by car 2, continuous communication}
	\label{fig:CC_AttackEst}
\end{figure}
\vspace{-0.6cm}
\begin{figure}[H]
	\centering
	\includegraphics[width=0.9\columnwidth,height = 0.48\columnwidth, trim={0cm 0cm 7cm 18cm},clip]{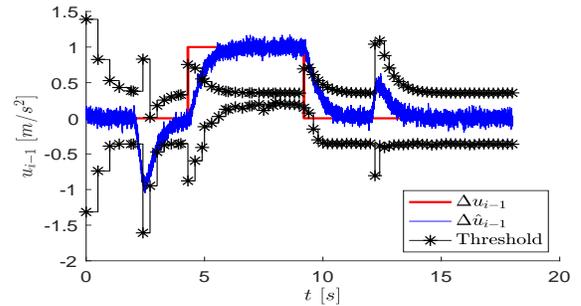}
	\vspace{-0.25 cm}
	\caption{Attack estimation by car 2, event-triggered communication}
	\label{fig:ET_AttackEst}
\end{figure}
\vspace{-0.3cm}
\section{Concluding remarks}%
\label{sec:conclusion}
\noindent In this paper a cyber attack detection and estimation algorithm is presented for a platoon of vehicles using a {\em Cooperative Adaptive Cruise Control} (CACC) algorithm and a realistic, event-triggered {\em Vehicle to Vehicle} communication protocol based on the ETSI ITS G5 standard. A {\em man-in-the-middle} error injection attack is considered on the transmitted intended acceleration of the preceding vehicle, $\Delta u_{i-1}$.

A detection and estimation approach was proposed, based on the so-called {\em Equivalent Output Injection} signal of a {\em Sliding Mode Observer} (SMO). This is combined with an adaptive threshold that is robust against false detection.

The main contribution of this paper is the design of a robust attack detection threshold which incorporates the effects of sensor noise and communication errors. This is done by combining the {\em One-Switch-Ahead} and the {\em Multiple-Switches-Ahead} thresholds. A second theoretical result was provided regarding the stability of the SMO under measurement uncertainties and event-triggered communication. Finally, a preliminary result is proposed that allows to estimate the amplitude of the cyber-attack, under ideal conditions. Simulation results verified the expected behaviour and robustness of the proposed solution, and showed that attack estimation could be attained in practice also under non-ideal conditions.

In future work, we would like to derive theoretical results on the attack estimation, which are also valid in non-ideal conditions, and extend the approach to the case of more general (non-)linear dynamical systems.
\appendix
\subsection{Upper and Lower bound for $\dot{\epsilon}_{1,i}$}
\label{app:bound_eps1_dot}
\noindent $\bar{\dot{\epsilon}}_{1,i}$ is defined as $\max(|\dot{\epsilon}_{1,i}|)$ and $\underline{\dot{\epsilon}}_{1,i}=\min(|\dot{\epsilon}_{1,i}|)$, they can be constructed by substituting the upper and lower bounds of all terms, as presented in eqs. \eqref{eq:low_bound1}-\eqref{eq:low_bound3}, into eq. \eqref{eq:obs_err_dyn_12}. This gives eqs. \eqref{eq:min_epsdot} and \eqref{eq:max_epsdot} as expressions for the bounds. Only the upper bound on $A_{11}\zeta_{1,i}$ in eq. \eqref{eq:low_bound3} is non-trivial and will be proved in theorem \ref{theorem:dot_eps_eff}.
\begin{eqnarray}
-|P(\bar{\epsilon}_{1}+\bar{\zeta})|\leq-P(\epsilon_{1,i}-\zeta_{1,i})\leq0
\label{eq:low_bound1}\\
-\left|A_{12} \bar{\epsilon}_{2,i} \right|\leq A_{12} \epsilon_{2,i} \leq\left|A_{12} \bar{\epsilon}_{2,i} \right|
\label{eq:low_bound2}\\
-\left|A_{11}\bar{\zeta}\right|\leq A_{11}\zeta_{1,i} \leq0
\label{eq:low_bound3}
\end{eqnarray}
\begin{equation}
\underline{\dot{\epsilon}}_{1,i} = -\left|A_{12} \bar{\epsilon}_{2,i} \right| + M_i
\label{eq:min_epsdot}
\end{equation}
\begin{equation}
\bar{\dot{\epsilon}}_{1,i} = |P(\bar{\epsilon}_{1,i}+\bar{\zeta})|+\left|A_{12} \bar{\epsilon}_{2,i} \right|+\left|A_{11}\bar{\zeta} \right| + M_i
    \label{eq:max_epsdot}
\end{equation}
\begin{theorem}
\label{theorem:dot_eps_eff}
Averaged over a maximum dwell time scenario (equations \eqref{eq:max_dwelltime} and \eqref{eq:eps_ratio}), the effect of $A_{11}\zeta_{1,i}$ on $\dot{\epsilon}_{1,i}$ can only be negative.
\end{theorem}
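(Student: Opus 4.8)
\noindent The plan is to exploit the nilpotent structure of $A_{11}$. With the partition of Section~\ref{sec:SMO} one has $A_{11}=\begin{bmatrix}0&1\\0&0\end{bmatrix}$, so $A_{11}\zeta_{1,i}=[\,\zeta_{i,(2)}\ \ 0\,]^{\top}$: this term enters only the \emph{first} component of $\dot\epsilon_{1,i}$, where it contributes exactly the scalar $\zeta_{i,(2)}=\eta_{i,(2)}-h\xi_{i,(3)}$. Since $\zeta_{i,(2)}$ is sign--indefinite, no pointwise bound $A_{11}\zeta_{1,i}\le0$ is possible, and the statement has to be read as: the time--integral $\int_{t_{2k}}^{t_{2k+1}}\zeta_{i,(2)}(t)\,dt$ is nonpositive over the maximum--dwell interval of length $\bar t$ used in \eqref{eq:max_dwelltime} (and, identically, over the hypothetical sub--interval $t_+$ of \eqref{eq:eps_ratio}). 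That is precisely what is needed for \eqref{eq:low_bound3} to be admissible inside \eqref{eq:min_epsdot}, hence for $\bar t$ and $t_+$ to remain valid upper bounds.

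The first step is to use the defining feature of a sliding (odd) interval: there $\mathrm{sgn}(\epsilon_{y,i})=1$, i.e. $\epsilon_{y,i}\ge0$ componentwise, so in particular $\epsilon_{y,i,(2)}=\epsilon_{1,i,(2)}-\zeta_{i,(2)}\ge0$, giving $\zeta_{i,(2)}(t)\le\epsilon_{1,i,(2)}(t)$ for all $t$ in the interval. Integrating, $\int_{t_{2k}}^{t_{2k+1}}\zeta_{i,(2)}\,dt\le\int_{t_{2k}}^{t_{2k+1}}\epsilon_{1,i,(2)}\,dt$, so it suffices to show the area under $\epsilon_{1,i,(2)}$ is nonpositive. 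The second step uses the description of the maximum--dwell scenario, already fixed when deriving Theorem~\ref{theorem:bound_eps1} and \eqref{eq:max_dwelltime}: at each switch $\epsilon_{y,i}=0$, hence $\epsilon_{1,i}=\zeta_{1,i}$, and the worst case takes the endpoint values $\epsilon_{1,i,(2)}(t_{2k})=\bar\zeta_{(2)}$, $\epsilon_{1,i,(2)}(t_{2k+1})=-\bar\zeta_{(2)}$; on the interval $\epsilon_{1,i}$ is decreasing, at the minimal rate $\underline{\dot\epsilon}_1$ of \eqref{eq:bound_eps1dot}---a bound which for the second component rests only on $|\epsilon_{2,i}|\le\bar\epsilon_{2,i}$ (through \eqref{eq:low_bound1}--\eqref{eq:low_bound2}) and therefore does not itself invoke \eqref{eq:low_bound3}, so the argument is not circular. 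Consequently $\epsilon_{1,i,(2)}(t)=\bar\zeta_{(2)}-\underline{\dot\epsilon}_1(t-t_{2k})$ on $[t_{2k},t_{2k}+\bar t]$ with $\bar t=2\bar\zeta_{(2)}/\underline{\dot\epsilon}_1$, whose integral is exactly $0$; hence $\int_{t_{2k}}^{t_{2k+1}}\epsilon_{1,i,(2)}\,dt\le0$ and therefore $\int_{t_{2k}}^{t_{2k+1}}\zeta_{i,(2)}\,dt\le0$. The same computation, applied to the decreasing sub--interval $t_+$ of \eqref{eq:eps_ratio}, closes that case, so in every maximum--dwell situation the net contribution of $A_{11}\zeta_{1,i}$ to $\dot\epsilon_{1,i}$ is nonpositive, and replacing it by $0$ in \eqref{eq:low_bound3} can only shorten the dwell times computed from \eqref{eq:min_epsdot}.

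The step I expect to be the main obstacle is this second one, and in particular reconciling the two components of $\epsilon_{y,i}$: the maximum--dwell interval is the longest time the \emph{first} component of $\epsilon_{y,i}$ can stay positive, whereas the inequality $\zeta_{i,(2)}\le\epsilon_{1,i,(2)}$ comes from the \emph{second} component, so one must first argue that $\epsilon_{y,i,(2)}\ge0$ (hence $\epsilon_{1,i,(2)}$ monotone) throughout that interval. A related delicate point is that for a \emph{general} admissible trajectory subject only to $|\epsilon_{1,i,(2)}|\le\bar\zeta_{(2)}$, $\underline{\dot\epsilon}_1\le|\dot\epsilon_{1,i,(2)}|\le\bar{\dot\epsilon}_1$ and the endpoint conditions, the area $\int\epsilon_{1,i,(2)}\,dt$ is not literally forced to be nonpositive unless the trajectory is the linear worst case; I would handle this either by restricting attention, as the theorem statement does, to exactly the maximum--dwell scenario (in which $\epsilon_{1,i,(2)}$ \emph{is} that linear profile), or by a short area--comparison argument that the linear profile is extremal, working throughout with the single conservative dwell length $\bar t=2\bar\zeta/\underline{\dot\epsilon}_1$ common to both components that is already implicit in \eqref{eq:max_dwelltime}.
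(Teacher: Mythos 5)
Your proof is correct and follows the same underlying logic as the paper's, but it is noticeably more careful, and in one place it quietly repairs the paper's argument rather than reproducing it. The paper's proof asserts that ``$A_{11}$ is a positive definite matrix'' and concludes that, since $\epsilon_{1,i}$ decreases monotonically from a value $\leq\bar\zeta$ to $-\bar\zeta$ (hence has nonpositive average) and $\zeta_{1,i}\leq\epsilon_{1,i}$ on the interval, the averaged effect of $A_{11}\zeta_{1,i}$ is nonpositive. In fact $A_{11}=\left[\begin{smallmatrix}0&1\\0&0\end{smallmatrix}\right]$ is nilpotent, not positive definite, so the paper's stated justification does not hold as written; what saves the conclusion is exactly the observation you make, namely that $A_{11}\zeta_{1,i}=[\,\zeta_{i,(2)}\;\;0\,]^{\top}$ has nonnegative entries as a linear map componentwise, so a nonpositive time--average of $\zeta_{i,(2)}$ transfers directly to a nonpositive averaged contribution. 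Your explicit computation that the linear worst--case profile from $+\bar\zeta_{(2)}$ to $-\bar\zeta_{(2)}$ integrates to exactly zero, your check that $\underline{\dot\epsilon}_1$ does not itself rely on \eqref{eq:low_bound3} (so the argument is not circular), and your treatment of the $t_+$ sub--interval (where the starting value is only $\leq\bar\zeta$, making the integral still nonpositive) are all refinements the paper omits. The one caveat you correctly flag yourself --- that the sign condition $\epsilon_{y,i}\geq0$ and the maximum--dwell construction must be read componentwise for the inequality $\zeta_{i,(2)}\leq\epsilon_{1,i,(2)}$ to be available --- is an ambiguity present in the paper as well (it treats $\mathrm{sgn}(\epsilon_{y,i})$ and ``$\epsilon_{y,i}>0$'' as if scalar), so you are not introducing a new gap, merely making an implicit assumption of the paper explicit.
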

\begin{proof}
In this scenario, for $\epsilon_{1,i}-\zeta_{1,i} >0$, $\epsilon_{1,i}$ is monotonically decreasing from an upper bound lesser or equal to $\bar{\zeta}$ to the lower bound $-\bar{\zeta}$. As $A_{11}$ is a positive definite matrix, the effect of $A_{11}\epsilon_{1,i}$ on $\dot{\epsilon}_{1,i}$, averaged over the maximum dwell time scenario, will be non-positive. As $\zeta_{1,i} <\epsilon_{1,i}$, the effect of $A_{11}\zeta_{1,i}$ on $\dot{\epsilon}_{1,i}$ will be negative.
\end{proof}
\subsection{Upper bound for $\epsilon_{2,i}$}
\label{app:eps2bound}
\noindent By taking the second row of eq. \eqref{eq:obs_error_dyn},  and bounding $A_{21}(\epsilon_{1,i}-\zeta_{1,i})<\left|A_{21}(\bar{\epsilon}_{1}+\bar{\zeta})\right|$ we obtain the differential inequality in eq. \eqref{eq:eps2_diffineq}.
\begin{equation}
    \dot{\epsilon}_{2,i} < \left|A_{21}(\bar{\epsilon}_{1}+\bar{\zeta})\right| + A_{22} \epsilon_{2,i} - b\Delta u_{i-1}
    \label{eq:eps2_diffineq}
\end{equation}
Using lemma 1.1.1 in \cite{Lakshmikantham2015-il}, this gives the expression for $\bar{\epsilon}_{2,i}$ in eq. \eqref{eq:max_eps_3}. Here $\Delta u_{i-1}$ is assumed to be piecewise constant.
\begin{equation}
\epsilon_{2,i} \leq \bar{\epsilon}_{2,i} = \epsilon_{2,i,0} e^{A_{22}t} - \frac{\left|A_{21}(\bar{\epsilon}_{1,i}+\bar{\zeta})\right|-b\Delta u_{i-1}}{A_{22}}
\label{eq:max_eps_3}
\end{equation}
\subsection{Upper bound for $\zeta_{1,i}$}
\label{app:bound_noise}
\noindent Looking at the definition of $\zeta_{1,i}$ in equations \eqref{eq:error_dynamics_definitions} and \eqref{eq:reduced_linsys}, its bound can be expressed in terms of the bounds on the individual noise terms (defined in assumption \ref{ass:measurement_uncertainties}) as
\begin{equation}
\bar{\zeta}=\left[\begin{matrix}
    \bar{\eta}_{i,(1)} + h \bar{\xi}_{i,(2)}\\
    \bar{\eta}_{i,(2)} + h \bar{\xi}_{i,(3)}
    \end{matrix}\right]
\label{eq:zeta_bar}
\end{equation}
\begin{spacing}{0.87}
\bibliographystyle{IEEEtran}
\bibliography{references}
\end{spacing}

\end{document}